\DeclareMathAlphabet{\mathantt}{OT1}{antt}{li}{it}
\DeclareMathAlphabet{\mathpzc}{OT1}{pzc}{m}{it}
\DeclarePairedDelimiter\norm{\lVert}{\rVert}%
\newtheorem{theorem}{Theorem}
\newtheorem{lemma}[theorem]{Lemma}
\DeclareFontFamily{OT1}{pzc}{}
\DeclareFontShape{OT1}{pzc}{m}{it}%
  {<-> s * [1.1] pzcmi7t}{}
\DeclareMathAlphabet{\mathpzc}{OT1}{pzc}%
                     {m}{it}
\title{BS-assisted Task Offloading for D2D Networks with Presence of User Mobility}
\author{Ghafour Ahani~and~\and Di Yuan}
\affil{Department of Information Technology \\ Uppsala University, Sweden\\
Emails:\{ghafour.ahani, di.yuan\}@it.uu.se
}
\begin{document}

\pagenumbering{gobble}
\maketitle
\begin{abstract}
Task offloading is a key component in mobile edge computing. Offloading a task to a remote server takes communication and networking resources. An alternative is device-to-device (D2D) offloading, where a task of a device is offloaded to some device having computational resource available. The latter requires that the devices are within the range of each other, first for task collection, and later for result gathering. Hence, in mobility scenarios, the performance of D2D offloading will suffer if the contact rates between the devices are low. We enhance the setup to base station (BS) assisted D2D offloading, namely, a BS can act as a relay for task distribution or result collection. However, this would imply additional consumption of wireless resource. The associated cost and the improvement in completion time of task offloading compose a fundamental trade-off. For the resulting optimization problem, we mathematically prove the complexity, and propose an algorithm using Lagrangian duality. The simulation results demonstrate not only that the algorithm has close-to-optimal performance, but also provide structural insights of the optimal trade-off.
\end{abstract}
\begin{IEEEkeywords}
Task offloading, D2D communications, Mobility, Base station, Relay.
\end{IEEEkeywords}

\IEEEpeerreviewmaketitle
\vspace{-4mm}
\section{Introduction}
Task offloading is a key component in mobile edge
computing. Typically, tasks are offloaded to remote servers
\cite{Barbera2013,Dinh2013,shi2016} or to computing resources near to
users, e.g., base stations (BSs) \cite{Mao2017,Wang2017}. However,
these incur significant overhead in communications and networking
\cite{Chen2018}. An attractive alternative is to offload tasks to
nearby users \cite{Mtibaa2013, Mtibaa2015}. For example, a user
that currently runs on low energy
offload send its task to an idle user with energy
available for computation.

In mobility scenarios, the data of a task can be delivered
via Device-to-Device (D2D) communications as the users
move and meet each other \cite{ahani2018,Deng2018}.  However, this is
not a system-wide optimal strategy especially when some users have low
contact rates with others. In such a situation, the system can be
enhanced by letting the BSs act as relays for task distribution and
result collection. In fact, this approach enables to utilize better
the energy capacity of users. On the other hand, all tasks should not
be relayed via the BSs as this requires a large number of
communications with the BSs. Thus, \textit{which device to offload}
and \textit{how long one should wait before the BS is called for} are
both key aspects in optimal task offloading.

Looking into the literature, there are relative few works
\cite{Mtibaa2013,Mtibaa2015,Wang2014,chen2016} that considered task
offloading in mobility scenarios. The works in
\cite{Mtibaa2013,Mtibaa2015} assumed that the connection between two
users is stable during the entire offloading process. The authors of
\cite{Wang2014} considered offloading one task to nearby users with
maximization of success ratio of obtaining the result. However, none
of these studies utilized BSs as relays. The investigation in
\cite{chen2016} considered a hybrid method
where a task can be offloaded to a remote server, a BS, or a nearby
mobile user. BSs are utilized as relays for
delivering the results, however the trade-off between completion time and
the cost of using BS is not accounted for.

In this paper, we study task offloading where users can offload their
tasks to either remote servers or peer devices, possibly using the
BSs, in a mobility scenario. For each task, we define a cost related
to the completion time and processing. For offloading, a user can wait
longer time to increases the opportunity of contact and then
collecting the result via D2D, but the completion time could be quite
long. The completion time can be made shorter if a BS assists with the
offloading, but this will involve additional communications
costs. Therefore, we optimize the time before the BS is involved in
task offloading. In addition, each task has a completion time deadline
before which the result of the task must be obtained. Our aim is to
minimize the total cost of the system. Moreover, the available energy
of the users for processing is taken into consideration.
The contributions of this study are as follows.  We formulate the task
offloading problem and show how it can be effectively linearized.
We also prove mathematically the complexity of the
problem.  Next, an algorithm based on Lagrangian duality is provided
for problem solving. Our algorithm is compared to other
algorithms. Simulation results demonstrate not only that the algorithm
has close-to-optimal performance, but also provide structural insights
of the optimal trade-off.

\section{System Model and Problem Formulation}\label{sec:system}
\subsection{System Model}

In our system scenario, a set of users need to offload their tasks. We call them \textit{requesters} and the set is denoted by $\mathcal{R}=\{1,2,\dots,R\}$. The second set of users, referred to as \textit{helpers}, have energy available for task processing. The index set of helpers is denoted by $\mathcal{H}=\{1,2,\dots,H\}$. We assume that all users are within the coverage of network such that BSs can be used as relays for task distribution or result collection. Merely to simplify the presentation, we assume there is one BS. The system scenario is shown in Figure \ref{fig:system_model}.
\begin{figure}[ht!]
\centering
\includegraphics[scale=0.5]{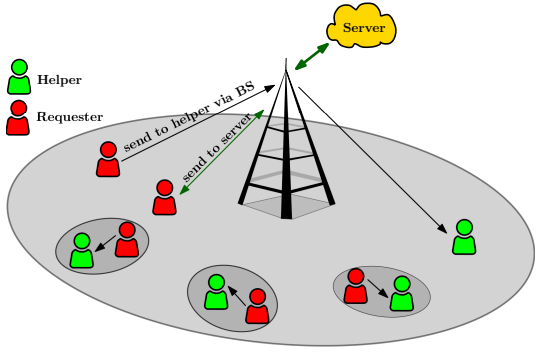}
\vspace{-3mm}
\begin{center}
  \caption{System scenario of D2D task offloading with possible BS assistance and presence of user mobility.}
  \label{fig:system_model}
  \vspace{-3mm}
\end{center}
\end{figure}

For the sake of presentation, we assume each requester has only one task. However, our formulations and algorithms can be generalized easily to a more general scenario where each requester has multiple tasks. Hereafter we use task $r$ and requester $r$ interchangeably.
The required amount of energy for processing task $r$ and one communication with the BS are denoted by $e^p_{r}$ and $e^c_r$ respectively. Each helper $h$, $h \in \mathcal{H}$, can provide at most $E_h$ amount of energy to process tasks.
Processing task $r$ by helpers and the remote server incur costs $\delta^p_{rh}$ and $\delta^p_{rN}$ respectively. These costs typically relate to the amount of energy required for computation. Each communication with the BS and remote server incurs a cost, denoted by $\delta^c_{rB}$ and $\delta^c_{rN}$ respectively. The cost of D2D communications is negligible.

The inter-contact model is widely used to characterize the mobility pattern
of mobile users~\cite{VConan2008,TDeng2017Cost}.  Hereafter, the term contact is used
to refer to the event that two users come into the communication range
of each other. The inter-contact time, that is the time between two
consecutive time point of meeting each other, for any two users
follows an exponential distribution \cite{Zhu2010}.  Hence, the number
of contacts between any two mobile users follows Poisson distribution
\cite{Sermpezis2015}. Moreover, it is assumed that the contact of user
pairs are independent.

We consider a time slotted system consisting of $T$ time slots,
$\mathcal{T}=\{1,2,\dots,T\}$, each with duration $\theta$. The
deadline of task $r$ is denoted by $d_r$.  As mentioned earlier, we
optimize the time before which a requester uses the BS for task
distribution and/or result collection. Thus for requester $r$ and
helper $h$ there is a \textit{timer} and its value is denoted by
$t_{rh}$.  The tasks are assumed to be delay tolerant, hence the
magnitude of time slot\footnote{
The magnitude of a time slot is in a range of hour.}  is considerably
larger than task processing time. Therefore, we do not account for the
processing time of the tasks. Moreover as the contact between the
helpers and requesters are stochastic, we consider the expected value
of the total cost of system.
The following five events may occur once helper $h$ is designated to task $r$:
\begin{enumerate}
\item
They meet at least twice before $t_{rh}$, then the task is collected and result is obtained,
both via D2D.
\item They meet exactly once before $t_{rh}$, then the task is collected, and they meet at least once again between $t_{rh}$ and $d_r$, then the result is obtained. This case also uses D2D communications twice.
\item They do not meet before $t_{rh}$, but they meet at least once between $t_{rh}$ and $d_r$. Then the BS is involved to deliver the task to the helper (with two communications: requester $\rightarrow$ BS, and BS $\rightarrow$ helper) and the result is obtained via D2D communications.
\item They meet exactly once before $t_{rh}$, however, they do not meet after this time point until $d_r$. Then the task is given to the helper via D2D communications and the result is obtained via the BS (with two communications: helper $\rightarrow$ BS, and BS $\rightarrow$ requester).
\item
    They do not meet at all within $d_r$. In this case, the task is sent to the server for processing\footnote{In this case, the BS also can be used for both the task distribution and result collection, but we do not account for this solution because it involves four communications with the BS.}.
\end{enumerate}

There is a cost associated with task completion time defined as the starting time point until the requester obtains the task's result. We introduce a cost function $f(\cdot)$ for which $f(t)$ is the cost for a completion time of $t$ slots. For the events above, we will derive the total expected task completion cost including the task completion time and the communication if applicable.

\vspace{-3mm}
\subsection{Cost Model}
Denote by binary variable $x_{rh}$ representing if
requester $r$ offloads its task to helper $h$.  The corresponding
variable for task offloading to the server is denoted by $x_{rN}$.
Denote by $P(M_{rh}^{[k,l]}=n)$ the probability that requester $r$
meets helper $h$ exactly $n$ times during time slots $k$ to $l$. Note
that when $k=l$, it is the probability of having $n$ contacts within
time slot $k$. For special case $k<l$, there are two cases, i.e.,
$n>0$ and $n=0$. Intuitively, their corresponding probabilities are
defined to zero and one. The probability $P(M_{rh}^{[k,l]}\ge n)$ is
defined in a similar way.  Here, $M_{rh}^{[k,l]}$ follows a Poisson
distribution with mean $\lambda_{rh}(l-k+1)\theta$, where
$\lambda_{rh}$ represents the average number of contacts per unit
time. Denote by $\Pi_{rh}^{(i)}, i\in \{1,2,3,4,5\}$, the probability
that event $i$ occurs and the expected cost of event $i$, $i\in
\{1,2,3,4,5\}$, is denoted by $\Delta_{rh}^{(i)}$.

The cost of assigning the task $r$ to helper $h$ originates from
waiting time before task completion, communications with BS (if
applicable), and task processing. The associated expected cost of each
event is derived and shown in Table \ref{tab:events}. For events $1$
and $2$, the first and second terms are the expected costs related to
task completion time and processing respectively. For events $3$ and
$4$, the first, second, and third terms are the expected costs related
to task completion time, processing, and communication with BS. For
event $5$, we have costs of processing and communications with
server.
\begin{table*}[t]
  \centering
    \caption{Expected costs and probabilities of events.}
  \label{tab:events}
  \vspace{-3mm}
  \begin{tabular}{lll}
  \hline \hline
  Event&  \multicolumn{1}{c}{Expected cost}&\multicolumn{1}{c}{Probability}\\
  \hline\hline
   1&
   $\!\begin{aligned}[t]\Delta_{rh}^{(1)}(t_{rh})=\sum_{k=0}^{t_{rh}}{f(k)
\Big(P(M_{rh}^{[1,k]}\ge2)-P(M_{rh}^{[1,k-1]}\ge2)\Big)}
+\Pi_{rh}^{(1)}(t_{rh})\delta^p_{rh}\end{aligned}$ & $\Pi_{rh}^{(1)}(t_{rh})=P(M_{rh}^{[1,t_{rh}]}\ge2)$  \\

   2&
   $\!\begin{aligned}[t]
    \Delta_{rh}^{(2)}(t_{rh})=&
 \sum_{k=t_{rh}+1}^{d_{r}}f(k)P(M_{rh}^{[1,t_{rh}]}=1) \Big(P(M_{rh}^{[t_{rh}+1,k]}\ge1)-\\
   &P(M_{rh}^{[t_{rh}+1,k-1]}\ge 1)\Big)+\Pi_{rh}^{(2)}(t_{rh})\delta^p_{rh}
    \end{aligned}$
    &
    $\Pi_{rh}^{(2)}(t_{rh})=P(M_{rh}^{[1,t_{rh}]}=1)P(M_{rh}^{[t_{rh}+1,d_{r}]}\ge1)$ \\

    3&
       $\!\begin{aligned}[t]
    \Delta_{rh}^{(3)}(t_{rh})=&\sum_{k=t_{rh}+1}^{d_{r}}f(k) P(M_{rh}^{[1,t_{rh}]}=0) \Big(P(M_{rh}^{[t_{rh}+1,k]}\ge1)-\\
&P(M_{rh}^{[t_{rh}+1,k-1]}\ge1)\Big)+\Pi_{rh}^{(3)}(t_{rh})\Big(\delta^p_{rh}+2\delta^c_{rB}\Big)
    \end{aligned}$
&
$\Pi_{rh}^{(3)}(t_{rh})=P(M_{rh}^{[1,t_{rh}]}=0)P(M_{rh}^{[t_{rh}+1,d_{r}]}\ge1)$\\

  4&
   $\Delta_{rh}^{(4)}(t_{rh})= \Pi_{rh}^{(4)}(t_{rh})\Big(f(t_{rh})+\delta^p_{rh}+2\delta^c_{rB}\Big)$
&
$\Pi_{rh}^{(4)}(t_{rh})=P(M_{rh}^{[1,t_{rh}]}=1)P(M_{rh}^{[t_{rh}+1,d_{r}]}=0)$\\

  5
  &
  $\Delta_{rh}^{(5)}(t_{rh})= \Pi_{rh}^{(5)}(t_{rh})\Big(\delta^p_{rN}+2\delta^c_{rN}\Big)$
&$\Pi_{rh}^{(5)}(t_{rh})=1-\sum_{i=1}^{4}\Pi_{rh}^{(i)}(t_{rh})$
\\
\hline
\hline
  \end{tabular}
\end{table*}
Thus, the total expected cost for using helper $h$ for task $r$ is:
\vspace{-2mm}
\begin{equation}
\begin{split}
\Delta_{rh}(t_{rh})=\sum_{i=1}^{5}\Delta_{rh}^{(i)}(t_{rh})
\end{split}
\end{equation}

Hence, the overall cost for all helpers and requesters is:

\vspace{-3mm}
\begin{equation}
\begin{split}
\text{Cost}({\bm{x},\bm{t}})=\sum_{r \in \mathcal{R}}\sum_{h \in \mathcal{H}}\Delta_{rh}(t_{rh})x_{rh}+\sum_{r \in \mathcal{R}}(2\delta^c_{rN}+\delta^p_{rN}) x_{rN}\label{eq:overalcost}
\end{split}
\end{equation}
where $\bm{x}$ and $\bm{t}$ are two matrices of dimensions
$R\times(H+1)$ and $R\times H$, respectively, representing
the offloading and timer variables.
Note that the cost function is
highly nonlinear, but we prove in Section \ref{sec:alg_design} that
this function can be linearized and the optimal value of timers
$t_{rh}$, $r \in\mathcal{R}, h\in \mathcal{H}$, can be preprocessed.

\vspace{-1mm}
\subsection{Energy Consumption on Helpers}
The energy consumed on a helper consists of those for processing and
communications, whereas that for
D2D communications is negligible in
comparison. Therefore, we consider the expected consumed
energy. Hence, for requester $r$ we have:
\vspace{-1mm}
\begin{equation}
\begin{split}
e_r=& \Pi_{rh}^{(1)}(t_{rh})e^p_r+\Pi_{rh}^{(2)}(t_{rh})e^p_r+\\
&\Pi_{rh}^{(3)}(t_{rh})(e^p_r+e^c_r)+\Pi_{rh}^{(4)}(t_{rh})(e^p_r+e^c_r)
\end{split}
\end{equation}
\vspace{-5mm}

\subsection{Problem Formulation}
The problem is formulated as follows:
\vspace{-5mm}
\begin{figure}[ht]
\begin{subequations}\label{form}
\begin{alignat}{2}
\quad &
{\min_{\substack{\bm{x} \in \{0,1\}^{R \times (H+1)}, \bm{t}\in \{0,1\}^{R \times H}}}~\text{Cost}(\bm{x},\bm{t})}
\label{F_e} \\
\text{s.t}. \quad
&\sum_{h\in \mathcal{H}} x_{rh}+x_{rN} = 1,r\in \mathcal{R} \label{const:1helper}\\
& \underset{r\in \mathcal{R}}\sum e_{r} x_{rh}\le E_h,h\in \mathcal{H} \label{const:totalenergy}
\end{alignat}
\end{subequations}
\end{figure}
\vspace{-4mm}

Constraints (\ref{const:1helper}) indicate that a requester must offload its task to either a helper or to the server. Constraints (\ref{const:totalenergy}) respect the available energy of helpers.
\section{Complexity Analysis}\label{sec:complexity}
\begin{theorem}\label{theorem_NP}
The task offloading problem is $\mathcal{NP}$-hard.
\end{theorem}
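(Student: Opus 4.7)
The plan is to prove $\mathcal{NP}$-hardness by a polynomial-time reduction from the Partition problem. The idea is to tune the parameters of a task offloading instance so that the mobility-dependent features of the formulation become inert, collapsing the problem to a capacitated assignment whose feasibility encodes Partition.

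Given a Partition instance with positive integers $a_1,\ldots,a_n$ satisfying $\sum_{r=1}^{n}a_r = 2B$, I construct a task offloading instance with $R=n$, $H=2$, and $E_1=E_2=B$. To neutralize the timer-dependent terms in $\Delta_{rh}$, I set $f(\cdot)\equiv 0$, $\delta^p_{rh}=0$, and $\delta^c_{rB}=0$, so that only the event-$5$ term $\Pi^{(5)}_{rh}(\delta^p_{rN}+2\delta^c_{rN})$ survives in $\Delta_{rh}$. I then set $e^p_r=a_r$, $e^c_r=0$, and $\delta^p_{rN}+2\delta^c_{rN}=1$, and choose contact rates $\lambda_{rh}$ so large that $\Pi^{(5)}_{rh}(t_{rh})\le\varepsilon<\min\{1/(B+1),\,1/n\}$ uniformly in $t_{rh}$. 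This is always achievable with $\lambda_{rh}$ of polynomial bit-length in the input, since a direct calculation gives $\Pi^{(5)}_{rh}(t_{rh})=e^{-\lambda_{rh}t_{rh}\theta}\cdot e^{-\lambda_{rh}(d_r-t_{rh})\theta}=e^{-\lambda_{rh}d_r\theta}$, which is independent of the timer.

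Under this construction, $e_r=(1-\Pi^{(5)}_{rh})a_r\in[(1-\varepsilon)a_r,a_r]$, and for integer-valued $a_r$ the capacity constraint $\sum_r e_r x_{rh}\le B$ becomes equivalent to the Partition constraint $\sum_r a_r x_{rh}\le B$: if $\sum_r a_r x_{rh}\le B$ then $\sum_r e_r x_{rh}\le B$ as well, while if $\sum_r a_r x_{rh}\ge B+1$ then $\sum_r e_r x_{rh}\ge(1-\varepsilon)(B+1)>B$. A feasible all-helper assignment therefore exists if and only if the Partition instance admits a balanced split. Since the helper contribution to the objective is at most $n\varepsilon<1$ while each server assignment contributes exactly $1$, the optimum is strictly below $1$ if and only if Partition is a yes-instance, yielding the reduction.

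The main obstacle I anticipate is that the timer $t_{rh}$ enters both the objective and the energy constraint through the probabilities $\Pi^{(i)}_{rh}(t_{rh})$, which prevents any direct ``set the timers aside'' shortcut. I handle it via the timer-free identity $\Pi^{(5)}_{rh}(t_{rh})=e^{-\lambda_{rh}d_r\theta}$, which lets a single choice of $\lambda_{rh}$ force $\Pi^{(5)}_{rh}$ below both $1/(B+1)$ and $1/n$ simultaneously; this collapses the capacity constraint to the Partition constraint on integer assignments and keeps the helper-side cost safely below the server cost. Once these two inequalities are in place, the remainder of the argument is routine bookkeeping.
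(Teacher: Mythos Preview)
Your reduction is correct. The key identity $\Pi^{(5)}_{rh}(t_{rh})=e^{-\lambda_{rh}d_r\theta}$ indeed holds (summing the four event probabilities in Table~\ref{tab:events} gives $1-P(M^{[1,t_{rh}]}_{rh}=0)P(M^{[t_{rh}+1,d_r]}_{rh}=0)$), and with $e^c_r=0$ this makes both $\Delta_{rh}$ and $e_r$ timer-independent, so the rest of your threshold argument goes through cleanly.

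The paper's own proof differs in the source problem and helper count: it reduces from \textsc{Knapsack} using a \emph{single} helper with capacity $E$, setting $\delta^c_{rN}=g_r/2$ so that assigning task $r$ to the helper saves exactly $g_r$ compared to the server, and the energy $e_r$ plays the role of the item weight. Your reduction from \textsc{Partition} with two helpers instead encodes feasibility rather than value-maximization, and you use the threshold ``optimum $<1$'' to separate yes- from no-instances. Both proofs neutralize the mobility features the same way ($f\equiv 0$, $\delta^p_{rh}=0$, $\delta^c_{rB}=0$, large $\lambda$), but you are more explicit about the $\varepsilon$-slack: the paper simply asserts $\Delta_{rh}=0$ after choosing $\lambda=\ln(1/\epsilon)$, whereas you give the quantitative bounds $\varepsilon<1/(B+1)$ and $\varepsilon<1/n$ that make the integer rounding and the cost separation rigorous. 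The Knapsack route is slightly more economical (one helper, no feasibility-threshold gymnastics), while yours has the advantage of making the timer-independence of $\Pi^{(5)}$ and the $\varepsilon$-control fully transparent. A minor omission: you should state explicitly that all $d_r$ and all $\lambda_{rh}$ are chosen equal so that $\Pi^{(5)}$ is the same constant across pairs, which your argument implicitly uses.
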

\begin{proof}
We adopt a polynomial-time reduction from the Knapsack problem of $N$
items having weights $\{e_1,e_2,\dots,e_N\}$, values
$\{g_1,g_2,\dots,g_N\}$, and capacity $E$.  Our reduction is as
follows. We have one helper, i.e., $\mathcal{H}=\{1\}$, with total
available energy $E$. There are $N$ requesters, i.e.,
$\mathcal{R}=\{1,2,\dots,N\}$. The expected amount of energy for
processing task $r$ is $e_r$. We set
$\lambda_{rh}=\ln(\frac{1}{\epsilon})$ for all requesters and helpers
where $\epsilon$ is a small positive number. Therefore, the requesters
and helpers meet at least once with probability $1-\epsilon$. Also, we
use $f(t)=0$ and set the same deadlines for all requesters, i.e.,
$d_{r}=d_{r^{\prime}}, r,r^{\prime} \in \mathcal{R}$. By construction
the optimal timer value is $d_r$ for all requesters. The number of
time slots is set to the deadline of tasks, i.e., $T=d_r$. The costs
for processing task $r$ by the helper and the server is set to
$\delta^p_{rh}=0$ and $\delta^p_{rN}=0$. The cost of communications
with BS and server is set to $\delta^c_{rB}=0$ and
$\delta^c_{rN}=\frac{g_r}{2}$. This setting results in the overall
completing cost for task $r$ by the helper and the server as $0$ and
$g_r$ respectively, i.e., $\Delta_{rh}=0$ and
$2\delta^c_{rN}+\delta^p_{rN}=g_r$. Consequently, if task $r$ is
offloaded to the helper, its gain is
$2\delta^c_{rN}+\delta^p_{rN}-\Delta_{rh}=g_r$. By construction, the
optimum to our problem solves the Knapsack problem instance.
As the Knapsack problem is $\mathcal{NP}$-hard, the conclusion follows.
\end{proof}

\section{Algorithm Design}\label{sec:alg_design}
The cost in equation (\ref{eq:overalcost}) has a rather complicated structure because of the nonlinearity. However, in the following we provide a structure insight stating that for each pair task $r$ and candidate helper $h$, the optimal value of $t_{rh}$ can be preprocessed. This enables us to reformulate the cost function as a linear function without loss of optimality.
\begin{lemma}
\label{Lemma1}
For any pair $r$ and $h$, the optimal value of timer $t_{rh}$ can be obtained with linear complexity.
\end{lemma}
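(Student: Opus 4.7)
The plan is to reduce finding the optimal timer for a fixed pair $(r,h)$ to a one-dimensional discrete minimization of $\Delta_{rh}(\cdot)$ over the $d_r+1$ admissible values $\{0,1,\ldots,d_r\}$, and to show that the entire table $\{\Delta_{rh}(t_{rh})\}_{t_{rh}=0}^{d_r}$ can be built in time $O(d_r)$ by exploiting the Poisson structure and incremental evaluation. Once the table is available, a single scan returns the argmin.

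First, I would precompute the Poisson probabilities $P(M_{rh}^{[1,k]}=n)$ for $n\in\{0,1,2\}$ and every $k\in\{0,1,\ldots,d_r\}$. Since $M_{rh}^{[1,k]}$ is Poisson with mean $\lambda_{rh}k\theta$, these follow from a simple one-step recurrence in $k$ and require $O(d_r)$ total operations; the probabilities $\Pi_{rh}^{(i)}(t_{rh})$ of Table~\ref{tab:events} then cost $O(1)$ per value of $t_{rh}$, and the components $\Delta_{rh}^{(4)}$, $\Delta_{rh}^{(5)}$ are immediate. The prefix sum defining $\Delta_{rh}^{(1)}$ is also built incrementally: advancing $t_{rh}$ by one adds a single term $f(t_{rh})\bigl(P(M_{rh}^{[1,t_{rh}]}\ge 2)-P(M_{rh}^{[1,t_{rh}-1]}\ge 2)\bigr)$, which is $O(1)$.

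The delicate step is the inner sum that appears (with different prefactors $P(M_{rh}^{[1,t_{rh}]}=1)$ and $P(M_{rh}^{[1,t_{rh}]}=0)$) in both $\Delta_{rh}^{(2)}$ and $\Delta_{rh}^{(3)}$: for each $t_{rh}$ its naive evaluation costs $\Theta(d_r-t_{rh})$, giving $\Theta(d_r^2)$ overall. To avoid this I would use the Poisson property
\[
P(M_{rh}^{[t_{rh}+1,k]}\ge 1)-P(M_{rh}^{[t_{rh}+1,k-1]}\ge 1)=e^{-\lambda_{rh}(k-1-t_{rh})\theta}\bigl(1-e^{-\lambda_{rh}\theta}\bigr),
\]
which factors the $k$-dependence and the $t_{rh}$-dependence. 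Writing $S(t_{rh})=\sum_{k=t_{rh}+1}^{d_r} f(k)e^{-\lambda_{rh}(k-1)\theta}$, the inner sum equals $(1-e^{-\lambda_{rh}\theta})e^{\lambda_{rh} t_{rh}\theta}S(t_{rh})$, and $S(\cdot)$ satisfies the backward recurrence $S(t_{rh})=S(t_{rh}+1)+f(t_{rh}+1)e^{-\lambda_{rh} t_{rh}\theta}$ with $S(d_r)=0$, so all its values are obtained after a single $O(d_r)$ sweep. Summing the contributions, $\Delta_{rh}(t_{rh})$ is tabulated in total time $O(d_r)$ and the minimization is a final $O(d_r)$ scan, yielding linear complexity; the main obstacle I anticipate is precisely establishing the factorization above, since without separating the $t_{rh}$- and $k$-dependencies in the inner sum the evaluation remains quadratic, whereas once this decomposition is in hand the rest is routine bookkeeping of running sums.
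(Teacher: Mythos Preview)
Your argument is correct, and in fact sharper than what the paper does. The paper's proof is purely the observation that $t_{rh}$ ranges over the finite set $\{0,1,\dots,d_r\}$ and is decoupled from all other pairs, so one simply evaluates $\Delta_{rh}(t_{rh})$ at every candidate and keeps the minimizer; it notes that each Poisson probability is $O(1)$ and each completion-time sum is $O(T)$, so one evaluation costs $O(T)$. Taken literally this yields $O(T\,d_r)$ total work, not linear, and the paper does not attempt to amortize across values of $t_{rh}$.

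Your route differs precisely at this point: you exploit the multiplicative separability of the Poisson increment $e^{-\lambda_{rh}(k-1-t_{rh})\theta}(1-e^{-\lambda_{rh}\theta})$ to decouple $k$ from $t_{rh}$, turning the inner sums of $\Delta_{rh}^{(2)}$ and $\Delta_{rh}^{(3)}$ into a single suffix sum $S(\cdot)$ computable by a backward recurrence, and you treat $\Delta_{rh}^{(1)}$ as a prefix sum. That extra idea is what buys a genuine $O(d_r)$ bound for the whole table, whereas the paper's argument is simpler but, as written, only supports quadratic total time. Both approaches agree on the essential structural point the lemma is used for downstream, namely that $t_{rh}^*$ can be precomputed independently of the assignment variables.
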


\begin{proof}
For each possible value of $t_{rh}$ from $0$ to
$d_r$, the value of $\Delta_{rh}(t_{rh})$ can be computed
in linear time, because $\Delta_{rh}(t_{rh})$ is the sum of costs of
the five possible events. The cost of each of them involves
calculating the probabilities and cost of completion time. The
probabilities can be obtained in $O(1)$ via formula $\frac{\lambda^k
e^{-\lambda}}{k!}$ and the completion time cost can be obtained in
$O(T)$ as there is maximum $T$ time slots. Thus the overall complexity
is $O(T)$.  Furthermore, the value of $\Delta_{rh}(t_{rh})$ is
independent from the other pairs. These together enable us to obtain
the optimal value of $t_{rh}$ by taking $\min$ operator over all
possible values, i.e., $\Delta^*_{rh}=\underset{t_{rh} \in
\{0,1,\dots,d_r\}~~~}{\min~\{\Delta_{rh}(t_{rh})\}}$.
\end{proof}
By Lemma \ref{Lemma1}, the objective function is linearized below.

\vspace{-7mm}
\begin{alignat}{2}\label{linear_formulation}
\quad &
\min_{\substack{\bm{x} \in \{0,1\}^{R \times (H+1)}}} \sum_{r \in \mathcal{R}}\sum_{h \in \mathcal{H}}\Delta^*_{rh}x_{rh}+\sum_{r \in \mathcal{R}}\delta_{rN} x_{rN} \\
\text{s.t}. \quad
&\text{(\ref{const:1helper}), (\ref{const:totalenergy})} \nonumber
\end{alignat}

\vspace{-7mm}
\subsection{Lagrangian Relaxation}

We apply Lagrangian relaxation to (\ref{const:totalenergy}).
Denote by $u_h$, $h\in \mathcal{H}$, the corresponding Lagrange multipliers. We have the following Lagrangian relaxation:

\vspace{-3mm}
\begin{alignat}{2}\label{DS}
\begin{split}
L(\bm{u})&=\min_{\substack{\bm{x} \in \{0,1\}^{R \times (H+1)}}}~\sum_{r \in \mathcal{R}}\sum_{h \in \mathcal{H}}\Delta^*_{rh}x_{rh}+\\
&\sum_{r \in \mathcal{R}}(2\delta^c_{rN}+\delta^p_{rN}) x_{rN}+\sum_{h \in \mathcal{H}}u_h(\sum_{r\in \mathcal{R}}e_{r} x_{rh}-E_h)\\
&~~~\text{s.t}. \quad \text{(\ref{const:1helper})}
\end{split}
\end{alignat}

The above problem is polynomial-time solvable as the only constraint \text{(\ref{const:1helper})} states that the task has to be assigned to either a helper or the server. Therefore, the optimal is to pick the helper or the server that minimizes the expected cost.
\vspace{-2mm}
\subsection{Subgradient Optimization}
The Lagrangian dual problem is
$v^{*}=\max_{\bf{u}\ge 0} L(\bf{u})$,
where $\bm{u}=[u_1,\dots,u_H]$. A subgradient, $\bm{d}=[d_1,\dots,d_H]$, to the concave function $L(\bm{u})$ can be obtained as:
\begin{alignat}{2}
d_h=\sum_{r\in \mathcal{R}}e_{r} \bar{x}_{rh}-E_h,h \in \mathcal{H}, \label{eq:search_direction}
\end{alignat}
where $\{\bar{x}_{rh}, r \in \mathcal{R}, h \in \mathcal{H}\}$ is obtained from the optimal solution
to $L(\bm{u}$ for the given $\bm{u}$.
The dual problem can be solved with subgradient optimization, described in Algorithm \ref{alg:sub}. In the algorithm, $K$ is the maximal allowed number of iterations, $\b{$v$}$ and $\bar{v}$ denote the best known lower and upper bounds on $v^*$. Any feasible solution yields upper bounds. Initially, we use $\bar{v}=\sum_{r \in \mathcal{R}}(2\delta^c_{rN}+\delta^p_{rN})$.
We user the following formula to calculate the step \cite{POLYAK1969}:
\begin{alignat}{2}
t^{(k)}=\max\{0,\eta \frac{\bar{v}-g(\bm{u}^{(k)})}{\norm{\bm{d}^{(k)}}^2}\} \text{ with } 0<\eta<2
 \label{eq:step_length}
\end{alignat}

\begin{algorithm}
\caption{Lagrangian-based Algorithm}
\label{alg:sub}
\begin{algorithmic}[1]
\STATE Choose a starting point $\bm{u}^{(1)}$, choose $\epsilon_1>0$ and $\epsilon_2>0$, $\b{$v$}\leftarrow-\infty$, $\bar{v} \leftarrow\sum_{r \in \mathcal{R}}(2\delta^c_{rN}+\delta^p_{rN})$
\REPEAT
 \STATE Solve (\ref{DS}), yielding $L(\bm{u}^{(k)})$ and $\bar{\bm{x}}$\\
\lIf{$L(\bm{u}^{(k)})>\b{$v$}$}
{
    $\b{$v$}\leftarrow L(\bm{u}^{(k)})$
}

\STATE Make an attempt to modify $\bar{\bm{x}}$ to a feasible solution, and possibly update $\bar{v}$ \label{makefeasiblesol}
\STATE Calculate search direction $\bm{d}^{(k)}$ and step length $t^{(k)}$ using formula (\ref{eq:search_direction}) and (\ref{eq:step_length}) respectively

 \STATE Update $\bm{u}^{(k+1)}=\bm{u}^{(k)}+t^{(k)} \bm{d}^{(k)}$
 \STATE $k \leftarrow k+1$
\UNTIL{$(k > K~\text{or}~\norm{\bm{d}^{(k)}}\le \epsilon_1~\text{or}~ \norm{\bm{u}^{(k)}-\bm{u}^{(k-1)}}\le \epsilon_2)$}

\end{algorithmic}
\end{algorithm}
\vspace{-2mm}
We carry out Step \ref{makefeasiblesol} as follows.
For each helper having its energy constraint violated, we reassign some of the allocated tasks to other helpers in ascending order of cost.

\vspace{-1mm}
\section{Performance Evaluation}\label{sec:performance}

For performance comparison, we consider two intuitive task offloading
strategies based on the expected cost and contact rates
respectively. For these two strategies the tasks are allocated to
helpers in descending order of expected completion cost and contact
rate, respectively. After allocating a task, the residual energy of
helpers will be updated. This process will be repeated until each task
is assigned to either a helper or the server.

The energy available of each helper is generated randomly within
interval $[1000,3000]$ Joule (J).  The experiments in
\cite{Passarella2013} have shown that $\lambda_{rh}$ follows Gamma
distribution. Here, we use Gamma distribution $\Gamma(0.5,1)$. The
energy required to process a task depends on two factors: the size of
data and type of workload \cite{Kwak2015}, and the number of CPU
cycles for processing one bit varies by workload
type~\cite{Kwak2015,Miettinen2010}.  We generate the tasks with data
size within interval $[0.5,5]$ MB and assign them workload such that
CPU cycles per bit is in the range
$[2000,37000]$. We consider $\frac{1}{730\times
10^6}$ J and $1.42\times 10^{-7}$ J for energy consumption of one CPU
cycle and one bit data transmission respectively
\cite{Miettinen2010}. The processing cost of task $r$ on helper $h$
and the remote server are set to $\delta^p_{rh}=e^p_r$ and
$\delta^p_{rN}=10e^p_r$. The other parameters are set as follows:
$\delta^c_{rB}=100e^c_r$, $\delta^c_{rN}=1000 e^c_r$, $f(t)=\alpha
t^2$ where $\alpha$ is a weighting factor, $T=24$, and $\theta=1$
hour. The deadlines of tasks are generated randomly within range
$[1,24]$ time slots. A task with more required energy has a longer
deadline. All simulation results are obtained by averaging over $100$
instances.

Figures \ref{fig:impcatH} and \ref{fig:impcatAlpha} show the impact of the number of helpers $R$ and weighting factor $\alpha$ on the expected total cost. In Figure \ref{fig:impcatH}, as expected, the cost decreases with respect to $H$.  For $H=3$, the performance gaps of cost-based and contact-based strategies with respect to the Lagrangian-based algorithm are $19\%$ and $5\%$ respectively, and the values grow to $29\%$ and $80\%$ for $H=7$. The reason for the increase is that the available energy is limited when $H=3$, thus most of the tasks are offloaded to the server, no matter which strategy or algorithm is used. But, when the number of helpers increases to $H=7$, the Lagrangian-based algorithm manages to utilize the energy of helpers to accommodate more tasks, whereas the two other strategies are less optimal in this regard. In addition, the solution from the Lagrangian-based algorithm is about $20\%$ from the lower bound of global optimum. This manifests that our algorithm produces close-to-optimal solutions.

In Figure~\ref{fig:impcatAlpha}, we observe that with the increase of $\alpha$, the overall expected cost increases. This is expected as a higher $\alpha$ means a growth in a coefficient in the objective function, whereas the solution space remains unchanged.
The contact-based strategy performs better than the cost-based one for large values of $\alpha$. The reason is that
the expected cost of each event basically consists of two main parts: the cost related to the expected completion time and the cost related to the processing and communications. The former depends on the weighting factor $\alpha$ and the contact rates between the requesters and helpers. Thus, larger $\alpha$ gives more emphasis on the contact rates, and consequently the contact-based strategy shows better performance when $\alpha$ increases.
Furthermore, the Lagrangian-based algorithm consistently and significantly outperforms the cost-based and contact-based task allocation strategies.

In Figure~\ref{fig:cdf}, the x-axis is the relative timer values with
respect to the deadline, before the BS is used, and the curves show
the percentages of requesters setting their timers being at most the
values of the x-axis, for various values of $\alpha$. We can see that
when parameter $\alpha$ increases, there are more requesters using
shorter timers. For example for $\alpha=0.1$, about $50\%$ of
requesters will use the BS at time point zero, while this percentage
for $\alpha=0.00016$ decreases to almost zero.  These results provide
structural insights of using D2D communications versus the BS as well as
the resulting cost trade-off, in relation to the amount of emphasis
put on task completion time.

\begin{figure}[ht!]
\centering
\includegraphics[scale=0.5]{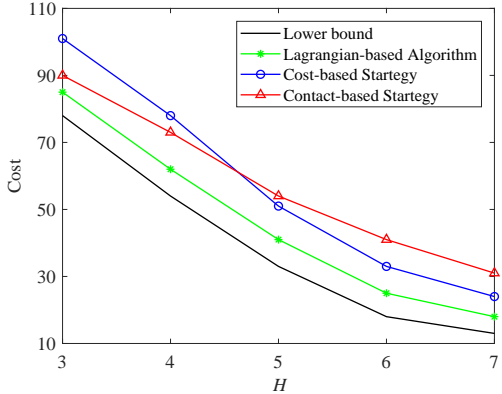}
\vspace{-3mm}
\begin{center}
\vspace{-3mm}
  \caption{Impact of $H$ on cost when $R=15$, $T=24$ , $\theta=1$ and $f(t)=0.004t^2$.}
  \label{fig:impcatH}
  \vspace{-3mm}
\end{center}
\end{figure}
\vspace{-3mm}
\begin{figure}[ht!]
\centering
\includegraphics[scale=0.5]{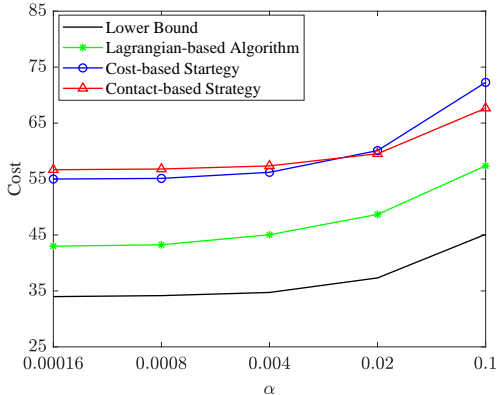}
\vspace{-3mm}
\begin{center}
\vspace{-3mm}
  \caption{Impact of $\alpha$ on cost when $R=15$, $H=5$, $T=24$, $\theta=1$, and $f(t)=\alpha t^2$.}
  \label{fig:impcatAlpha}
  \vspace{-3mm}
\end{center}
\end{figure}

\vspace{-3mm}
\begin{figure}[ht!]
\centering
\includegraphics[scale=0.5]{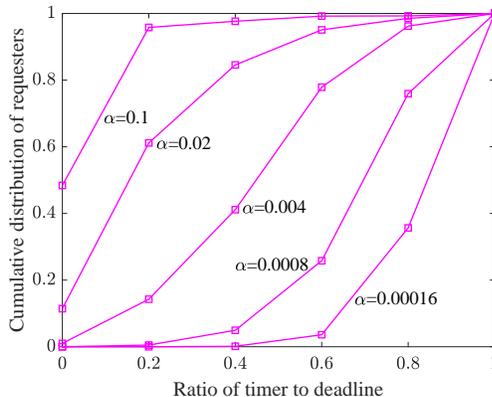}
\vspace{-3mm}
\begin{center}
\vspace{-3mm}
  \caption{Impact of $\alpha$ on the timers, i.e., the amount of time that a user waits before the BS is called for assistance, with respect to deadlines when $R=15, H=5$, $T=24$, $\theta=1$, and $f(t)=\alpha t^2$.}
  \label{fig:cdf}
  \vspace{-3mm}
\end{center}
\end{figure}
\vspace{-3mm}

\section{conclusions}
We have studied a task offloading problem with presence
of user mobility and possible assistance of BS as relay.
For this optimization problem, we have provided structural
insight, complexity analysis, and a solution algorithm.
Simulation results manifested that our algorithm
has a small gap with the optimal solutions and outperforms the other
two strategies, i.e., cost-based and contact-based strategies. The
future work plan is to investigate a more hierarchical task offloading
architecture for mobility scenarios.

\vspace{-3mm}
\bibliographystyle{IEEEtran}
\bibliography{IEEEabrv,ForIEEEBib}

 \end{document}